\newenvironment{proof sketch}[1]{\noindent {\emph{Proof sketch of #1:}}}{\hfill
 \qed}
\newtheorem{theorem}{Theorem}
\newtheorem{lemma}{Lemma}
\newtheorem{fact}{Fact}
\newtheorem{corollary}{Corollary}
\newtheorem{proof}{Proof}
\newcounter{Codeline}
\newcommand{\Newcodeline}{\setcounter{Codeline}{1}}
\newcommand{\Cl}{{\theCodeline}: \addtocounter{Codeline}{1}}
\newcommand{\crm}{\\}
\newcommand{\ot}{\ensuremath{\leftarrow}}
\newcommand{\qed}{\hfill $\Box$}
\newcommand{\Scomment}[1]{\ensuremath{/\ast} #1 \ensuremath{\ast/}}
\begin{document}

\title{A New Direction for Counting Perfect Matchings}

\author{Taisuke Izumi\thanks{Nagoya Institute of Technology, Gokiso-cho,
Showa-ku, Nagoya, Aichi, 466-8555, Japan (Corresponding Author).} 
\and Tadashi Wadayama\thanks{Nagoya Institute of Technology} }

\maketitle \thispagestyle{empty}


\begin{abstract}
In this paper, we present a new exact algorithm for counting 
perfect matchings, which relies on neither inclusion-exclusion 
principle nor tree-decompositions. For any bipartite graph of $2n$ nodes 
and $\Delta n$ edges such that $\Delta \geq 3$, our algorithm 
runs with $O^{\ast}(2^{(1 - 1/O(\Delta \log \Delta))n})$ time 
and exponential space. Compared to the previous algorithms, it achieves 
a better time bound in the sense that the performance degradation to 
the increase of $\Delta$ is quite slower. The main idea of our algorithm 
is a new reduction to the problem of computing the cut-weight distribution 
of the input graph. The primary ingredient of this reduction is
MacWilliams Identity derived from elementary coding theory. The whole 
of our algorithm is designed by combining 
that reduction with a non-trivial fast algorithm computing 
the cut-weight distribution. To the best of our knowledge, the approach 
posed in this paper is new and may be of independent interest.
\end{abstract}

\section{Introduction}


Counting perfect matchings in given input graph $G$ is recognized 
as one of hard combinatorial problems. In particular, the case that
$G$ is bipartite has attracted much 
attention with its long history because of the relation to the 
computation of permanent, which is a characteristic value of 
matrices with many important applications. 
Since counting perfect matchings for bipartite graphs belongs to 
\#P-complete, there seems to be no algorithm which runs within 
polynomial time for any input. Thus all of the previous studies lies on
one (or more) of the following directions: Approximation, restriction of
input graphs, or exact exponential algorithms. In this paper, we focus
on the third line.

A seminal exponential-time algorithm for counting perfect
matchings is Ryser's one based on the inclusion-exclusion 
principle~\cite{Ryser63}. For any bipartite graph $G$ of 
$2n$ vertices, it counts perfect matchings with $O^{\ast}(2^n)$ 
time\footnote{$O^{\ast}$ means the Big-O notation with omitting 
$\mathrm{poly}(n)$ factors.} and polynomial memory space. There
has been several improvements following that work:
Bax and Franklin have shown an algorithm running with 
$O^{\ast}(2^{(1 - 1/O(n^{2/3}\ln n))n})$ expected time and 
exponential space~\cite{BF02}. Servedio and Wan have given an 
algorithm with a time upper bound depending on the average degree 
$\Delta$~\cite{SW05}. It achieves $O^{\ast}(2^{(1 - 1/O(\exp(\Delta)))n})
$ time and polynomial space. Another approach to this problem is the
usage of tree decompositions~\cite{ALS91,RBR09}. By combining the fact
that sparse graphs have a treewidth less than $(1 - \epsilon) n$ 
for some constant $\epsilon$ (e.g., if $\Delta \leq 3$, $\epsilon \approx 5/6$
holds~\cite{FK06}), we can obtain an algorithm running 
$O^{\ast}(2^{(1- \epsilon) n})$ time.  All of these algorithms 
break $O^{\ast}(2^n)$-time barrier in some sense. However, 
during last 50 years, there has been proposed no algorithm 
achieving exponential-time speedup for {\em any} graph, which 
is a big open problem in this topic.

Our result presented in this paper can be put on the same line.
The main contribution is to propose a new algorithm for counting
perfect matchings. For any bipartite graph of $2n$ nodes and 
$\Delta n$ edges, it runs with $O^{\ast}(2^{(1 - 1/O(\Delta 
\log \Delta))n})$ time and exponential space. While this algorithm 
does not settle the open problem stated above, its speed-up factor 
becomes substantially closer to the exponential compared to 
the previous algorithms. 

An important remark is that the approach we adopt is quite different 
from any previous solutions. It relies on neither 
inclusion-exclusion nor tree decomposition. Actually, the main
idea is an extremely-simple reduction to the problem of computing the 
cut-weight distribution of the input graph. The precise construction 
of our algorithm can be summarized as follows:
\begin{itemize}
\item For any {\em odd} input bipartite graph $G$ of $2n$ nodes and 
$m$ edges, we can show that the number of $G$'s perfect matchings is 
equal to the number of elements with weight $m - n$ in its cycle space.
In addition, for any bipartite graph $G$, it is possible to construct the
odd bipartite graph $\tilde{G}$ which has the same number of perfect
matchings as $G$, by adding a constant number of nodes. 
\item By utilizing the primal-dual relation between cycle space
and cut space, we can reduce the problem of counting cycle-space elements 
with weight $m-n$ to computing the weight distribution of the cut space.
The technical tool behind this reduction is the use of
MacWilliams identity, which is a well-known theorem derived from 
elementary coding-theory. That identity provides the linear transformation
(by so-called {\em Krawtchouk matrices}) that maps the 
weight-distribution vector of any cut space to the corresponding 
cycle space. 
\item Since the cardinality of the cut space is vertex-exponential, 
it is easy to construct a naive algorithm with $O^{\ast}(2^{2n})$ running 
time. We improve its running time by utilizing the bipartiteness property 
and a novel technique analogous to separator decompositions. 
\end{itemize}

It should be noted that except for the last step, our approach 
is applicable to any graphs which may not be bipartite. Our reduction
technique can be seen as an algebraic approach to the design of 
exact algorithms as considered in~\cite{BHKK07,LN10}, where several kinds of 
algebraic transformations are used for appropriate handling of 
target universes. To the best of our knowledge, this is the first 
attempt using the transformation by MacWilliams Identity (or 
equivalently Kratwtchouk matrices) for that objective.

The organization of the paper is as follows: 
We first presents several notions and definitions in Section~\ref{sec:CodingTheory}, which includes an tiny tutorial of linear
codes. Section~\ref{sec:Reduction} introduces our reduction to cut space. 
The algorithm to compute the cut-weight distribution is shown in 
Section~\ref{sec:Algorithm} gives
an algorithm computing the cut space. We mention the
related work in Section~\ref{sec:RelatedWork}, and finally conclude
the paper in Section~\ref{sec:Conclusion} with the open problems
posed by our result.

\section{Preliminaries from Coding Theory}
\label{sec:CodingTheory}

A linear code $C$ over $\mathbb{F}_2$ defined by $n \times m$ 
matrix $M$ is the set of $m$-dimensional vectors as follows: 
\[
C = \{\boldsymbol{v}M | \boldsymbol{v} \in \mathbb{F}^n_2\}.
\]
The matrix $M$ is called the {\em generator matrix} of $C$. By the
definition, code $C$ is the linear subspace of $\mathbb{F}_{2}^m$ 
spanned by the row vectors of $M$. The rank of that subspace
is denoted by $\mathit{rank}(M)$. Clearly, the number of codewords
in $C$ (denoted by $|C|)$ is equal to $2^{\mathit{rank}(M)}$. 
A $(m ,r)$-linear code is the one such that the length of codewords 
is $m$ and its rank is $r$. 

Let $C$ be a linear code with generator matrix $M$. The {\em parity
check matrix} $H$ of $C$ is the $m \times (m - \mathrm{rank}(M))$ 
matrix satisfying $H\boldsymbol{w}^T = \boldsymbol{0}$ for any 
codeword $\boldsymbol{w} \in C$. It is well-known that there is
a duality between generator matrices and parity check matrices:
For the code $C^\perp$ with generator matrix $H$, it is easily verified
that $\boldsymbol{v}^{T} M = 0$ holds for any $\boldsymbol{v} \in C^\perp$. 
That is, $M$ is the parity check matrix of $C^\perp$. Then the code 
$C^\perp$ is called the {\em dual} code of $C$. Obviously
$\boldsymbol{v}^T \boldsymbol{v}^\perp = 0$ holds for any 
$\boldsymbol{v} \in C$ and 
$\boldsymbol{v}^{\perp} \in C^\perp$. It implies that the dual code is the orthogonal 
complement of the primary code.

Given a codeword $\boldsymbol{w}$, the number of appearance of value $1$
in $\boldsymbol{w}$ is called the {\em weight} of $\boldsymbol{w}$.
The {\em weight distribution} of a $(m, r)$-linear code $C$ is 
the $m$-dimensional vector whose $k$-th entry $W_C[k]$ is 
the number of codewords with weight $k$ in $C$. The weight
distribution is often represented as the form of generating functions
$F_C(x) = \sum_{w = 0}^{m} W_C[w]x^w$. This function is called
the {\em weight-distribution polynomial} of $C$. There is a well-known 
theorem providing a relationship between
the weight-distribution polynomials of primary and dual codes:

\begin{theorem}[MacWilliams Identity~\cite{MS77}] \label{thmMcWilliams}
Let $C$ be a $(m, r)$-linear code over $\mathbb{F}_2$ and 
$C^\perp$ be its dual. Then, the following identity holds:
\[
F_{C}(x) = \frac{1}{2^r}(1 + x)^m
F_{C^{\perp}}\left(\frac{1-x}{1+x}\right).
\]
\end{theorem}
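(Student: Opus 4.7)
The plan is to deduce the identity from a standard discrete Fourier / character-sum argument on $\mathbb{F}_2^m$ applied to the coordinate-factorizable generating function $f(\boldsymbol{u}) = x^{\mathrm{wt}(\boldsymbol{u})} = \prod_{i=1}^m x^{u_i}$, whose summation over $C$ is exactly $F_C(x)$. Writing $\chi_{\boldsymbol{v}}(\boldsymbol{u}) = (-1)^{\boldsymbol{v}\cdot\boldsymbol{u}}$ for the characters of $\mathbb{F}_2^m$, I would introduce the Fourier transform $\hat f(\boldsymbol{v}) = \sum_{\boldsymbol{u} \in \mathbb{F}_2^m} \chi_{\boldsymbol{v}}(\boldsymbol{u}) f(\boldsymbol{u})$, and the whole proof then reduces to evaluating the single quantity $\sum_{\boldsymbol{v} \in C^\perp} \hat f(\boldsymbol{v})$ in two different ways.

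The first evaluation is direct: because $f$ factorizes across coordinates, so does $\hat f$, and a one-coordinate computation gives $\sum_{u_i \in \{0,1\}} (-1)^{v_i u_i} x^{u_i} = 1 + (-1)^{v_i} x$, so $\hat f(\boldsymbol{v}) = (1+x)^{m - \mathrm{wt}(\boldsymbol{v})}(1-x)^{\mathrm{wt}(\boldsymbol{v})}$. Summing over $C^\perp$ and pulling out $(1+x)^m$ yields exactly $(1+x)^m\, F_{C^\perp}\!\bigl(\tfrac{1-x}{1+x}\bigr)$, which is the desired right-hand side up to the normalization constant. The second evaluation swaps summations: $\sum_{\boldsymbol{v} \in C^\perp} \hat f(\boldsymbol{v}) = \sum_{\boldsymbol{u}} f(\boldsymbol{u}) \sum_{\boldsymbol{v} \in C^\perp} \chi_{\boldsymbol{v}}(\boldsymbol{u})$, and the inner sum equals $|C^\perp|$ when $\boldsymbol{u} \in (C^\perp)^\perp = C$ and $0$ otherwise, so the whole quantity collapses to $|C^\perp|\, F_C(x)$. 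Equating the two expressions produces the identity in the stated form.

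Nothing in this plan is computationally difficult; the one step I would take care to justify is the orthogonality relation $\sum_{\boldsymbol{v} \in C^\perp} \chi_{\boldsymbol{v}}(\boldsymbol{u}) = |C^\perp| \cdot \mathbf{1}[\boldsymbol{u} \in C]$, since the whole argument pivots on this exchange of summation between $C^\perp$ and $C$. It can be checked directly: for $\boldsymbol{u} \in C$ every summand is $1$, while for $\boldsymbol{u} \notin C = (C^\perp)^\perp$ the character $\chi_{\boldsymbol{u}}$ restricted to the subgroup $C^\perp$ is nontrivial, hence its sum over that subgroup vanishes. Once this orthogonality and the double-dual identification $(C^\perp)^\perp = C$ are in hand, MacWilliams' identity drops out as a one-line consequence of the two evaluations above.
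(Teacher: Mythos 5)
Your argument is the standard character-sum (discrete Fourier / Poisson summation) proof of the MacWilliams identity, and it is essentially complete: the coordinatewise factorization $\hat f(\boldsymbol{v}) = (1+x)^{m-\mathrm{wt}(\boldsymbol{v})}(1-x)^{\mathrm{wt}(\boldsymbol{v})}$, the orthogonality relation $\sum_{\boldsymbol{v}\in C^\perp}(-1)^{\boldsymbol{v}\cdot\boldsymbol{u}} = |C^\perp|\cdot\mathbf{1}[\boldsymbol{u}\in C]$, and the identification $(C^\perp)^\perp = C$ are exactly the right ingredients, and you correctly single out the orthogonality step as the one requiring justification. There is nothing to compare against in the paper itself: the theorem is imported from the coding-theory literature without proof, so yours is the only argument on the table, and it is the canonical one.

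The one place you should not wave your hands is the final normalization. Your two evaluations give $|C^\perp|\,F_C(x) = (1+x)^m F_{C^\perp}\!\bigl(\tfrac{1-x}{1+x}\bigr)$, i.e.\ the constant is $1/|C^\perp| = 2^{-(m-r)}$ when $C$ has rank $r$. The statement as printed has $2^{-r}$, with $r$ defined in Section~\ref{sec:CodingTheory} as the rank of $C$ itself; the two agree only when $r = m - r$. A sanity check with $C = \mathbb{F}_2^m$ (so $r = m$ and $C^\perp = \{\boldsymbol{0}\}$) gives $F_C(x) = (1+x)^m$ on the left but $2^{-m}(1+x)^m$ on the printed right-hand side. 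So your derivation is the correct one, and the exponent in the printed constant is the dimension of the dual code rather than of $C$; asserting that the computation ``produces the identity in the stated form'' papers over this mismatch. Either correct the constant to $2^{-(m-r)}$ (equivalently $1/|C^\perp|$) or state explicitly that $r$ there must be read as $\dim C^\perp$ --- and note that the same convention propagates into the coefficient form of the identity used later in the paper.
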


By comparing the coefficient of each monomial in both sides, we have 
the representation of $W_C[k]$ by a linear sum of the weight distribution 
of $C^\perp$:
\begin{equation} \label{math:McWilliams}
W_{C}[i] = \frac{1}{2^r} \sum_{j=0}^{m} K_m(j, i) W_{C^{\perp}}[j],
\end{equation}
where $K_m(j, i)$ is the value known as Krawtchouk polynomials, defined
as follows:
\[
K_m(j,i) = \sum_{k=0}^{m} (-1)^k {i \choose k} {{m - i}\choose {j - k}}.
\]

\section{Counting Perfect Matchings via Cycle Space}
\label{sec:Reduction}

\subsection{Cut and Cycle Spaces}

In this section any arithmetic operation for elements of vectors and matrices
is over field $\mathbb{F}_2$. Letting $G = (V, E)$ be an undirected graph
with $n$ vertices $v_1, v_2, \cdots, v_n$ and $m$ edges $e_1, e_2, \cdots, e_m$, 
its {\em incidence matrix} $A^G = (A^G_{i,j}) \in \mathbb{F}_2^{n \times m}$ 
is the one such that $A^{G}_{i,j} = 1$ if and only if $v_i$ is 
incident to $e_j$ and $A^{G}_{i,j} = 0$ otherwise. It is easy to check that 
the $i$-th row of $A^G$ is the 0-1 vector representation of the set of 
edges incident to $v_i$. Given a 0-1 (row) vector representation 
of $\boldsymbol{v}_S$ for a vertex subset 
$S \subseteq V$, $\boldsymbol{v}_S A^G$ is the cutset between $S$ and 
$V \setminus S$. It implies that the linear code defined by the generator 
matrix $A^G$ is equivalent to the family of edge subsets each of 
which represents a cutset, so-called the {\em cut space} of $G$.

As an well-known fact, the set of all cycles in $G$ induces a linear 
subspace of $\mathbb{F}_2^m$, where each element is a 0-1 vector 
representation of the edge set constituting one or more cycle(s). This 
subspace is called the {\em cycle space} of $G$. Note that the 
cycle space can be recognized as the set of all spanning even subgraphs 
(i.e., subgraphs where every vertex has an even degree). The matrix whose 
row is the basis of $G$'s cycle space is denoted by $B^G$. Similarly
to the cut space, we regard the cycle space as a linear code defined
by the generator matrix $B^G$. An important relationship between 
cut space and cycle space, stated below, is known: 

\begin{fact}
The cycle space of $G$ is the orthogonal complement of the cut space
of $G$.
\end{fact}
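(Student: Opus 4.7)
The plan is to establish the fact by proving two inclusions via a dimension count: first show that the cycle space is contained in the orthogonal complement of the cut space, and then show that the two spaces have complementary dimensions summing to $m$.

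For the containment, I would take an arbitrary spanning even subgraph $\boldsymbol{c} \in \mathbb{F}_2^m$ (an element of the cycle space) and an arbitrary row $\boldsymbol{r}_i$ of the incidence matrix $A^G$ (the indicator vector of edges incident to $v_i$), and compute the inner product $\boldsymbol{r}_i \cdot \boldsymbol{c}^T$ over $\mathbb{F}_2$. This inner product counts, modulo $2$, the number of edges of $\boldsymbol{c}$ incident to $v_i$, which is the degree of $v_i$ in the subgraph encoded by $\boldsymbol{c}$. Since $\boldsymbol{c}$ is an even subgraph by definition, this degree is even, so the inner product is zero. Because every row of $A^G$ generates the cut space and orthogonality is preserved under linear combinations, every cycle-space element is orthogonal to every cut-space element.

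For the dimension count, I would argue separately about the ranks. For the cut space, letting $c$ denote the number of connected components of $G$, the only linear dependencies among the rows of $A^G$ come from summing the rows indexed by the vertices of a single component (each edge appears in exactly two such rows, so the sum vanishes over $\mathbb{F}_2$); one can verify these are the only relations by restricting to a spanning forest and observing the rows associated to any proper subset of vertices of one component yield a nonzero sum. Thus $\dim(\text{cut space}) = n - c$. For the cycle space, the standard cyclomatic argument uses any spanning forest $T$ of $G$: each of the $m - (n - c)$ non-tree edges $e$ induces a unique fundamental cycle in $T + e$, and these fundamental cycles are linearly independent (each contains its own non-tree edge) and span the cycle space. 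Hence $\dim(\text{cycle space}) = m - n + c$.

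Adding the two dimensions gives $m$, which matches the dimension of $\mathbb{F}_2^m$. Combined with the orthogonality inclusion established above, this forces equality between the cycle space and the orthogonal complement of the cut space. The main obstacle I expect is the dimension calculation for the cut space, since carefully justifying that the component-sum relations are the only ones requires a small combinatorial argument on a spanning forest; the cyclomatic count and the orthogonality itself are essentially immediate from the definitions.
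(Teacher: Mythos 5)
Your proof is correct and is the standard argument for this classical fact: orthogonality via the parity of vertex degrees in an even subgraph, combined with the dimension count $\dim(\text{cut space}) = n - c$ and $\dim(\text{cycle space}) = m - n + c$ summing to $m$. The paper itself states this as a well-known fact without proof, so there is nothing to contrast with; your dimension bookkeeping is also consistent with the paper's later use of $d = n - \mathrm{rank}(A^G)$ for the number of connected components.
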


This fact implies that the linear code associated with a cycle space
is the dual code of that with the corresponding cut space, and vice versa. 
In the following argument, given an undirected graph $G$, $C(G)$ and 
$C^\perp(G)$ denote the linear codes defined by the generator matrices 
$B^G$ and $A^G$ respectively. We often use term
``cutset of $G$'' as the meaning of the codeword of $C(G)$
associated with that cutset. The same usage is also applied for cycle spaces.

\subsection{From Cycle Space to Number of Perfect Matchings }

Given an undirected graph $G = (V, E)$, we consider 
counting the number of perfect matchings of $G$. Since there is 
no perfect matching if the number of vertices is odd, we define 
$2n = |V|$. Let $m = |E|$ for short.  The degree
of vertex $v$ is denoted by $d(v)$. First we focus on the case that 
$G$ is an {\em odd} graph, i.e., a graph such that $d(v)$ is odd for any
$v$ in $V$. The number of perfect matchings of 
odd graph $G$ is related to $G$'s cycle space by the following lemma.

\begin{lemma} \label{lmaMatchingCode}
For any odd graph $G$, the number of perfect matchings in $G$
is equal to $W_{C(G)}[m - n]$.
\end{lemma}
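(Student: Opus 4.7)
The plan is to establish an explicit bijection between perfect matchings of $G$ and weight-$(m-n)$ codewords of $C(G)$, by taking the edge-complement.

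First, I would recall the basic characterization of the cycle space: an edge subset $F \subseteq E$ corresponds to a codeword of $C(G)$ if and only if every vertex of $G$ has even degree in $F$ (this is the ``spanning even subgraph'' description already stated above). The weight of the associated codeword is exactly $|F|$.

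Next, I would exploit the hypothesis that $G$ is odd. For any $F \subseteq E$ and any vertex $v$, the degrees satisfy $d_F(v) + d_{E \setminus F}(v) = d(v)$. Since $d(v)$ is odd, $d_F(v)$ is even if and only if $d_{E \setminus F}(v)$ is odd. Therefore, setting $M = E \setminus F$, the map $F \mapsto M$ is a bijection between cycle-space elements $F$ with $|F| = m - n$ and edge subsets $M \subseteq E$ with $|M| = n$ in which every vertex has odd degree.

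The final step is the counting argument identifying such $M$ with perfect matchings. If $|M| = n$ and $d_M(v)$ is odd (hence $\geq 1$) for every $v \in V$, then
\[
2n = 2|M| = \sum_{v \in V} d_M(v) \geq \sum_{v \in V} 1 = 2n,
\]
forcing $d_M(v) = 1$ for every vertex $v$; that is, $M$ is a perfect matching. Conversely, every perfect matching $M$ clearly satisfies $|M| = n$ and has all degrees equal to $1$, so $E \setminus M$ contributes a cycle-space element of weight $m - n$. The two directions together give the claimed equality $W_{C(G)}[m-n] = \#\{\text{perfect matchings of } G\}$. There is no real obstacle in this proof; the only subtlety is spotting that the oddness hypothesis is exactly what lets the ``complement'' trick convert the even-degree constraint into the all-ones-degree constraint that pins down a perfect matching.
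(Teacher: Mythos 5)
Your proof is correct and follows essentially the same route as the paper's: complement the codeword's edge set, use the oddness of $G$ to convert the even-degree condition into an odd-degree condition, and apply a degree-sum count to force every degree to equal $1$. The only difference is cosmetic — you run the handshake argument directly on the complement $M$ (each $d_M(v)\geq 1$ summing to $2n$) where the paper argues by contradiction on the even subgraph side, and you spell out the converse direction that the paper leaves implicit.
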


\begin{proof} \normalfont 
Let $V = \{v_0, v_1, \cdots, v_{2n-1}\}$ be the set of vertices in $G$.
We prove the lemma by defining a bijection between the set of 
codewords with weight $m - n$ and perfect matchings. More precisely,
we prove that the complement (in terms of the edge set of $G$) of any 
codeword $\boldsymbol{w}$ in $C(G)$ with weight $m - n$ is a 
1-factor (equivalent to 
a perfect matching). Let $G_w$ be a spanning even subgraph 
corresponding to $w$. The degree of $v_i \in V$ in $G_w$ is 
denoted by $d'(v_i)$. To prove that the complement of $G_w$ is a 
1-factor, it suffices to show that $d'(v_i) = d(v_i) - 1$ holds for any 
$v_i \in V$. Suppose for contradiction that $d'(v_i) \neq d(v_i) - 1$ holds 
for some  $v_i \in V$. Since $d'(v_i) \leq d(v_i)$, $d(v_i)$ is odd, and 
$d'(v_i)$ is even (recall $G_w$ is a spanning even subgraph of $G$), 
we have $d'(v_i) < d(v_i) - 1$. To make 
$\sum^{n-1}_{i = 0} d'(v_i)= 2(m - n)$ hold, there must exist another vertex
$v_j$ satisfying $d'(v_j) > d(v_j) -1 \Rightarrow d(v_j) = d'(v_j)$. 
It contradicts the fact that $d(v_j)$ is odd. 
\end{proof}

Combining the lemma above and Theorem~\ref{thmMcWilliams},
we obtain the following corollary:

\begin{corollary}
Let $G$ be an arbitrary odd graph. There exists an algorithm to 
count the number of perfect matchings in $G$ with $O(m\tau(5m))$ time
provided that the weight distribution $W_{C^\perp(G)}$ is available,
where $m$ is the number of edges in $G$ and $\tau(x)$ be the time 
required for arithmetic operations of two $x$-bit integers.
\end{corollary}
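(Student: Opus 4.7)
\begin{proof sketch}{(Corollary)}
The plan is to chain Lemma~\ref{lmaMatchingCode} with the coefficient form of MacWilliams Identity (Equation~(\ref{math:McWilliams})), and then estimate the arithmetic cost of the resulting closed-form expression.

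First, Lemma~\ref{lmaMatchingCode} tells us that the number of perfect matchings of the odd graph $G$ equals $W_{C(G)}[m - n]$. Plugging $i = m - n$ into Equation~(\ref{math:McWilliams}) yields
\[
W_{C(G)}[m - n] \;=\; \frac{1}{2^r}\sum_{j = 0}^{m} K_m(j, m - n)\, W_{C^\perp(G)}[j],
\]
where $r = \mathit{rank}(A^G) \leq 2n - 1$. Because the weights $W_{C^\perp(G)}[j]$ are already available, the only remaining work is to compute the $m + 1$ Krawtchouk values $K_m(0, m - n), \ldots, K_m(m, m - n)$ and to evaluate the inner product; the concluding division by $2^r$ is an exact right shift since the left-hand side is a nonnegative integer.

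Next, I would generate the Krawtchouk column through the standard three-term recurrence in $j$ (equivalent to incrementally expanding the polynomial $(1-x)^{m-n}(1+x)^{n}$), which produces all $m + 1$ values with $O(m)$ arithmetic operations; evaluating the weighted sum costs another $O(m)$ operations.

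Finally, I would control the bit lengths. Since $|C^\perp(G)| \leq 2^{2n - 1}$, each $W_{C^\perp(G)}[j] \leq 2^{2n}$, and from the generating-function identity $(1 - x)^{i}(1 + x)^{m - i} = \sum_j K_m(j, i)\, x^j$ we have $|K_m(j, m - n)| \leq \binom{m}{j} \leq 2^m$. Any odd graph on $2n$ vertices satisfies $m \geq n$, so each summand lies in $[-2^{3m}, 2^{3m}]$ and the accumulated sum (over $m+1$ terms) fits comfortably inside $5m$ bits. Hence every arithmetic operation appearing in the computation runs in $O(\tau(5m))$ time, for a total cost of $O(m\,\tau(5m))$ as claimed. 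The only delicate point is this bit-length bookkeeping; the algorithmic content reduces to evaluating one column of the Krawtchouk matrix and taking a single inner product against the supplied vector.
\end{proof sketch}
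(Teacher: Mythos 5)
Your proof is correct and follows essentially the same route as the paper: combine Lemma~\ref{lmaMatchingCode} with the coefficient form of MacWilliams Identity at $i = m-n$, then bound the bit lengths of the Krawtchouk values and the weight-distribution entries so that each of the $O(m)$ arithmetic operations costs $\tau(5m)$. Your additions---generating the Krawtchouk column by a three-term recurrence in $O(m)$ operations, and the sharper bound $|K_m(j,i)| \leq \binom{m}{j}$ via Vandermonde---are sound and in fact tidy up details the paper leaves implicit.
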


Note that the absolute value of Krawtchouk polynomials has a trivial 
upper bound $|K_m(j, i)| \leq \mathrm{poly}(m){m \choose m/2}^2 \leq 2^{2m +
O(\log m)}$, and the number of all codewords of $C^\perp(G)$ is at most 
$2^n \leq 2^m$. Thus, the time required for each arithmetic operation 
in the right term of formula~\ref{math:McWilliams} is bounded by $\tau(5m)$.

\subsection{Transformation to Odd Bipartite Graph}

While the result in the previous subsection assumes that $G$ is an odd 
graph, that assumption can be easily removed. The fundamental idea is 
to construct the odd graph $\tilde{G}$ that has the same number of 
perfect matchings as $G$. While we only consider the case that
$G$ is a bipartite graph in this paper, general graph can be
handled similarly. Let $G = (V_1 \cup V_2, E)$, 
be an arbitrary bipartite graph such that $|V_1| = |V_2| = n$, and 
$V = V_1 \cup V_2$ for short. The set of even-degree vertices in $V_i$ is
denoted by $V^{\mathrm{even}}_i$ ($i \in \{1, 2\}$). We can easily show 
the following lemma:

\begin{lemma} \label{lmaEvenOdd}
The values of $|V^{\mathrm{even}}_1|$ and $|V^{\mathrm{even}}_2|$ have
the same parity.
\end{lemma}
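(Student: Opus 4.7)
The plan is to use the handshaking identity restricted to each side of the bipartition. The key observation to exploit is that in a bipartite graph every edge has exactly one endpoint in $V_1$ and exactly one endpoint in $V_2$, so $\sum_{v \in V_1} d(v) = \sum_{v \in V_2} d(v) = |E|$. This gives me two separate degree-sum identities that must agree modulo $2$, and from them I can read off parity information about the odd-degree vertices on each side.

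The first step is to split each side as $V_i = V_i^{\mathrm{even}} \cup V_i^{\mathrm{odd}}$ and reduce the identity $\sum_{v \in V_i} d(v) = |E|$ modulo $2$. The sum over $V_i^{\mathrm{even}}$ vanishes modulo $2$ by definition, and each vertex in $V_i^{\mathrm{odd}}$ contributes $1$, so the congruence collapses to $|E| \equiv |V_i^{\mathrm{odd}}| \pmod{2}$ for $i \in \{1,2\}$. Comparing the two congruences yields $|V_1^{\mathrm{odd}}| \equiv |V_2^{\mathrm{odd}}| \pmod{2}$ immediately.

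To close the argument, I would invoke the hypothesis $|V_1| = |V_2| = n$ to rewrite $|V_i^{\mathrm{even}}| = n - |V_i^{\mathrm{odd}}|$ for both $i$, so $|V_1^{\mathrm{even}}| - |V_2^{\mathrm{even}}| = |V_2^{\mathrm{odd}}| - |V_1^{\mathrm{odd}}|$, which is even by the previous step. The desired parity equality follows.

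I do not anticipate any real obstacle here: the statement is essentially a one-line consequence of the handshake lemma applied separately to each color class, and bipartiteness is precisely what makes the two per-side degree sums both equal $|E|$ (and hence share parity). The only subtlety worth flagging is that the argument uses bipartiteness in this step only—an analogous statement for an arbitrary vertex bipartition of a general graph need not hold, since internal edges in either part would contribute twice to one side and zero to the other.
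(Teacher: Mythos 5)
Your proposal is correct and follows essentially the same route as the paper: both reduce the per-side degree-sum identity $\sum_{v\in V_i} d(v)=|E|$ modulo $2$ to conclude that the odd-degree vertex counts on the two sides share the parity of $|E|$, and then pass to the even-degree counts via $|V_1|=|V_2|$. Your write-up is in fact slightly more careful than the paper's, which phrases the same argument as a case split on the parity of $|E|$ and leaves the final use of $|V_1|=|V_2|$ implicit.
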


\begin{proof} \normalfont
Assume that $|E|$ is odd. Since 
$\sum_{v \in V^{\mathrm{even}}_i} d(v)$ is even for any $i \in \{1, 2\}$,
$\sum_{v \in V \setminus V^{\mathrm{even}}_i} d(v)$ must be odd.
Thus, $|V \setminus V^{\mathrm{even}}_i|$ is odd for any $i \in \{1, 2\}$
because any node in $V \setminus V^{\mathrm{even}}_i$ has an odd degree.
It implies that $|V^{\mathrm{even}}_i|$ is odd for any $\{1, 2\}$. The 
case of even $|E|$ can be proved similarly. 
\end{proof}

The construction of $\tilde{G}$ is given as follows:

\begin{itemize}
\item Add two vertices $\tilde{v}_{i, 1}$, and $\tilde{v}_{i, 2}$ 
to $V_i$ for each $i \in \{1, 2\}$. 
\item For each $i \in \{1, 2\}$, connect each node in 
$V^{\mathrm{even}}_i$ with $\tilde{v}_{3 - i, 1}$, and $\tilde{v}_{3 - i, 1}$
with $\hat{v}_{i, 2}$. 
\item If $d(\tilde{v}_{1, 1})$ and $d(\tilde{v}_{2, 1})$ are
even, connect them. Recall that $d(\tilde{v}_{1, 1})$ and $d(\tilde{v}_{2, 1})$
have the same parity  from Lemma~\ref{lmaEvenOdd}.
\end{itemize}

An example of the construction is shown in Figure~\ref{fig:bipartiteConstruction}.
For the constructed graph $\tilde{G}$, we have the following lemma. 

\begin{lemma} \label{lmaBiparttiteGraph} 
The graph $\tilde{G}$ is an odd bipartite graph, and has the same 
number of perfect matchings as $G$.
\end{lemma}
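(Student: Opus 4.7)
The plan is to verify the three constituent claims in turn: (i) $\tilde{G}$ remains bipartite, (ii) every vertex of $\tilde{G}$ has odd degree, and (iii) the number of perfect matchings is preserved. For (i) I would place the new vertices $\tilde{v}_{1,1},\tilde{v}_{1,2}$ on the $V_1$ side and $\tilde{v}_{2,1},\tilde{v}_{2,2}$ on the $V_2$ side and observe that each new edge---whether it joins a vertex of $V_i^{\mathrm{even}}$ to $\tilde{v}_{3-i,1}$, joins $\tilde{v}_{3-i,1}$ to $\tilde{v}_{i,2}$, or optionally joins $\tilde{v}_{1,1}$ to $\tilde{v}_{2,1}$---crosses the bipartition by construction. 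Note also that the vertex counts on the two sides both increase by $2$, so the bipartition remains balanced and a perfect matching can at least potentially exist.

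For (ii) I would track the degree change at each vertex. A vertex $v \in V_i^{\mathrm{even}}$ gains exactly one incident edge (to $\tilde{v}_{3-i,1}$), turning its even degree odd; a vertex of $V_i \setminus V_i^{\mathrm{even}}$ gains no edges and is therefore still of odd degree. The helper vertex $\tilde{v}_{i,2}$ has degree exactly $1$. Finally, $\tilde{v}_{i,1}$ has degree $|V_{3-i}^{\mathrm{even}}| + 1$ before the conditional third step, so its degree is odd precisely when $|V_{3-i}^{\mathrm{even}}|$ is even. Since Lemma~\ref{lmaEvenOdd} guarantees that $|V_1^{\mathrm{even}}|$ and $|V_2^{\mathrm{even}}|$ have the same parity, both $\tilde{v}_{1,1}$ and $\tilde{v}_{2,1}$ are simultaneously odd or simultaneously even; in the latter case the third step fires, adding one to each, and again yields odd degrees at both.

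For (iii) the key observation is that $\tilde{v}_{1,2}$ and $\tilde{v}_{2,2}$ each have exactly one neighbor, namely $\tilde{v}_{2,1}$ and $\tilde{v}_{1,1}$ respectively. Hence in every perfect matching of $\tilde{G}$ the edges $\{\tilde{v}_{1,2},\tilde{v}_{2,1}\}$ and $\{\tilde{v}_{2,2},\tilde{v}_{1,1}\}$ must appear. Once these are forced, all four helper vertices are matched, and the remaining vertex set is precisely $V_1 \cup V_2$. Every remaining edge available is an original edge of $G$, since every added edge not yet consumed was incident to one of $\tilde{v}_{1,1}$ or $\tilde{v}_{2,1}$. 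Thus a perfect matching of $\tilde{G}$ is the union of these two forced edges with an arbitrary perfect matching of $G$, and this is a bijection.

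I do not expect a genuine obstacle here; the only point that requires care is the parity bookkeeping for $\tilde{v}_{1,1}$ and $\tilde{v}_{2,1}$, where the conditional nature of the third step must be shown consistent with Lemma~\ref{lmaEvenOdd}. Everything else is a direct check of adjacencies and degrees.
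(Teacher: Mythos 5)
Your proposal is correct and follows essentially the same argument as the paper: the degree-one vertices $\tilde{v}_{1,2}$ and $\tilde{v}_{2,2}$ force the two edges $\{\tilde{v}_{1,1},\tilde{v}_{2,2}\}$ and $\{\tilde{v}_{2,1},\tilde{v}_{1,2}\}$ into any perfect matching, which yields the bijection with perfect matchings of $G$. Your degree and parity bookkeeping for $\tilde{v}_{1,1}$ and $\tilde{v}_{2,1}$ via Lemma~\ref{lmaEvenOdd} is in fact more explicit than the paper's, which dismisses the odd-degree claim with ``clearly.''
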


\begin{proof} \normalfont
Any node in $\tilde{G}$ clearly has an odd degree. Let $M \subseteq \tilde{E}$ 
be any perfect matching of $\tilde{G}$. Since $\tilde{v}_{1, 2}$ and
$\tilde{v}_{2, 2}$ is degree one, edges $\{\tilde{v}_{1, 1}, \tilde{v}_{2, 2}\}$
and $\{\tilde{v}_{2, 1}, \tilde{v}_{1, 2}\}$ are necessarily included in $M$.
Then $M \setminus \{\{\tilde{v}_{1, 1}, \tilde{v}_{2, 2}\},
\{\tilde{v}_{2, 1}, \tilde{v}_{1, 2}\}\}$ is a perfect matching of $G$.
Conversely, given a perfect matching $M' \subseteq E$ of $G$, $G \cup \{\{\tilde{v}_{1, 1}, \tilde{v}_{2, 2}\}, 
\{\tilde{v}_{2, 1}, \tilde{v}_{1, 2}\}\}$ is a perfect matching of 
$\tilde{G}$. Thus, we have a one-to-one correspondence between 
the perfect matchings of $G$ and those of $\tilde{G}$. The lemma 
is proved. 
\end{proof}

\begin{figure}
\begin{center}
\includegraphics[keepaspectratio,width=140mm]{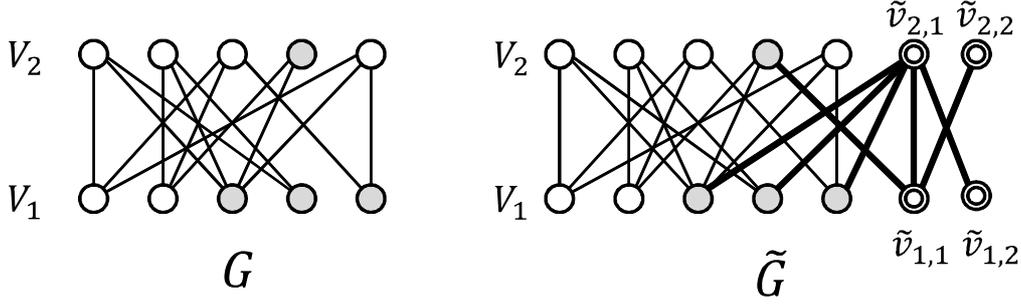}
\caption{The construction of $\tilde{G}$}
\label{fig:bipartiteConstruction}
\end{center}
\end{figure}

\section{Computing Weight Distribution}
\label{sec:Algorithm}

As seen in the previous section, the computation of the 
cut weight distribution for graph $\tilde{G}$
induces the number of perfect matchings of $G$. Thus, in what follows,
we focus on algorithms for computing the cut weight distribution.

The set of edges constituting a cut is associated with a partition
of all vertices: A partition $(S, V \setminus S)$ of all vertices $V$
induces a cutset, which is the set of edges crossing between $S$ and 
$V \setminus S$. Thus we often use the sentence ``partition 
$(S, V \setminus S)$ of $V$'' as the meaning of 
the cut associated with that partition. We define $c(S, T)$ to be 
the set of edges crossing two disjoint subsets $S$ and $T$ ($S, T \subseteq V$).
In particular, if $S$ (resp. $T$) is a singleton $\{ v\}$, we use notation 
$c(v, T)$ (resp. $c(S, v)$).

While two different partitions can lead the same cutset (e.g., 
$(S, V \setminus S)$ and $(V \setminus S, S)$), it is well-known that 
exactly $2^d$ subsets induce the same cutset, where $d$ is the number 
of connected components of $G$ and equal to $n - \mathrm{rank}(A^G)$. 
Thus, instead of computing $W_{C^{\perp}(G)}$, we rather consider the cut-weight distribution $W'_{C^{\perp}(G)}$ over all 
partitions, that is, $W'_{C^\perp(G)}[k] = |\{S \subseteq V | |c(S, V \setminus S)| = k\}|$. It is easy to calculate $W_{C^{\perp}(G)}$ from 
$W'_{C^{\perp}(G)}$ because of the relation of $W_{C^{\perp}(G)} = 
2^{-d} \cdot W'_{C^{\perp}(G)}$.

\subsection{$O^{\ast}(2^n)$-time Algorithm}

A straightforward way of computing $W'_{C^{\perp}(G)}$ is 
to enumerate all partitions of $V$ with computing their weights,
which trivially takes $O^{\ast}(2^{2n})$ time. In the case of bipartite graphs, 
we can reduce the time required for computing the cut-weight distribution.
As a first step, this subsection proposes an $O^{\ast}(2^{n})$-time algorithm,
which has the same performance as Ryser's one~\cite{Ryser63} 
(in terms of the base of the exponential part). Further improvement
of the running time is considered in the following subsection.

Let $G = (V_1 \cup V_2, E)$ be the input bipartite graph such that
$|V_1| = |V_2| = n$ and $|E| = m$, and $V = V_1 \cup V_2$ for short.
For weight-distribution vector $W$ and integer value $x \in [-m, m]$, 
we define $\sigma_x(W)$ as the vector obtained by shifting each element 
of $W$ $x$ times. That is,
\begin{equation*}
\sigma_x(W)[i] = \left\{
\begin{array}{ll}
0 & \mbox{if $i < x$},\\
W[i - x] & \mbox{if $n - 1 \geq i \geq x$}, \\
0 & \mbox{if $i \geq n + x$}.
\end{array}
\right. 
\end{equation*}

Note that the case of $i < x$ or $i \geq n + x$ applies only when
$x$ is positive or negative respectively.
Let $V'$ be a subset of $V$. We say that partition $(S, V \setminus S)$ is 
{\em conditioned} by a subset partition $(S', V' \setminus S')$ if 
$S \supseteq S'$ and $(V \setminus S) \supseteq (V' \setminus S')$ holds.
Let $\mathcal{P}_{S'|V'}$ be the set of all partitions of $V$ 
conditioned by $(S', V' \setminus S')$, and $W_{S'|V'}$ be the 
cut-weight distribution over all partitions in $\mathcal{P}_{S'|V'}$.
Our algorithm relies on the fact that $W_{S|V_1}$ can be computed within
polynomial time in $n$ provided that a partition $(S, V_1 \setminus S)$ of 
$V_1$ is given. 
In the following argument, we introduce an arbitrary ordering 
$v_0, v_1, \cdots v_{n - 1}$ of vertices in $V_2$. We define 
$V^i = \{v_i, v_{i+1}, \cdots v_{n - 1}\} \cup V_1$. The lemma 
behind the correctness of our algorithm is stated below:

\begin{lemma} \label{lma:Recursion}
For a given partition $(S, V_1 \setminus S)$, 
let $l = |c(v_i, V_1 \setminus S)| - |c(v_i, S)|$.
Then $W_{S |V^{i+1}} = W_{S | V^i} + 
\sigma_{l}(W_{S | V^i})$ holds.
\end{lemma}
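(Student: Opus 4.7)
The plan is to partition $\mathcal{P}_{S|V^{i+1}}$ into two disjoint classes according to which side of the cut contains $v_i$, show that each class is in bijection with $\mathcal{P}_{S|V^i}$, and compute the weight shift between the classes.

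First I would observe a structural fact: for every partition $(T, V\setminus T)$ lying in $\mathcal{P}_{S|V^{i+1}}$ (or $\mathcal{P}_{S|V^i}$), one has $T \cap V_1 = S$. Indeed, since $V_1 \subseteq V^{i+1} \subseteq V^i$, the conditioning $T \supseteq S$ and $V\setminus T \supseteq V^{i+1}\setminus S \supseteq V_1 \setminus S$ pins down $T \cap V_1$ to be exactly $S$. Consequently, the only freedom in extending $(S, V^{i+1}\setminus S)$ beyond what is already pinned is the assignment of $v_i$ together with the vertices outside $V^i$.

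Next I would split $\mathcal{P}_{S|V^{i+1}}$ into the class $\mathcal{B}$ in which $v_i \in V\setminus T$, and the class $\mathcal{A}$ in which $v_i \in T$. Since $v_i \in V^i \setminus S$, the class $\mathcal{B}$ is exactly $\mathcal{P}_{S|V^i}$, so its contribution to the weight-distribution vector is precisely $W_{S|V^i}$. For $\mathcal{A}$, I would use the obvious bijection $(T, V\setminus T) \mapsto (T\setminus\{v_i\}, (V\setminus T) \cup \{v_i\})$ with $\mathcal{P}_{S|V^i}$.

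The key computation is the change in cut weight under this bijection. Because $G$ is bipartite and $v_i \in V_2$, every edge incident to $v_i$ has its other endpoint in $V_1$. Using $T \cap V_1 = S$ and $(V\setminus T)\cap V_1 = V_1 \setminus S$, the set of cut edges incident to $v_i$ equals $c(v_i, V_1 \setminus S)$ when $v_i \in T$, and equals $c(v_i, S)$ when $v_i \in V\setminus T$. All other cut edges are the same in both partitions. Hence moving $v_i$ from $T$ to $V\setminus T$ decreases the cut weight by exactly $l = |c(v_i, V_1\setminus S)| - |c(v_i, S)|$. Equivalently, for each partition in $\mathcal{P}_{S|V^i}$ of weight $k$, there is a companion partition in $\mathcal{A}$ of weight $k + l$, so the contribution of $\mathcal{A}$ to $W_{S|V^{i+1}}$ is $\sigma_l(W_{S|V^i})$. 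Summing the contributions of $\mathcal{A}$ and $\mathcal{B}$ yields the claimed identity.

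I do not anticipate any real obstacle; the only item requiring care is verifying that the shift operator $\sigma_l$ behaves correctly for both signs of $l$, but the definition given just before the lemma is designed exactly to handle positive and negative shifts symmetrically, so the formula $W_{S|V^{i+1}}[k] = W_{S|V^i}[k] + W_{S|V^i}[k-l]$ translates directly into $W_{S|V^{i+1}} = W_{S|V^i} + \sigma_l(W_{S|V^i})$ without case analysis.
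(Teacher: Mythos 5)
Your proof is correct and follows essentially the same route as the paper's: decompose $\mathcal{P}_{S|V^{i+1}}$ according to the side containing $v_i$, identify the $v_i\notin T$ class with $\mathcal{P}_{S|V^i}$, and use the move-$v_i$ bijection to see that the other class contributes $\sigma_l(W_{S|V^i})$. Your explicit remark that bipartiteness (i.e., all neighbors of $v_i$ lying in $V_1$, whose assignment is already fixed by $S$) is what makes the weight change a constant $l$ independent of the free vertices $v_0,\dots,v_{i-1}$ is a point the paper's proof leaves implicit, and is worth having.
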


\begin{proof} \normalfont
From the definition of $W_{S|V^i}$, $W_{S|V^{i+1}} = 
W_{S\cup\{v_i\}|V^i} + W_{S|V^i}$ clearly holds.
Thus it suffices to show $W_{S\cup\{v_i\}|V^i} = 
\sigma_{l}(W_{S | V^i})$. Let $(S', V \setminus S')$ 
be a partition in $\mathcal{P}_{S|V^i}$, and $k$ be its weight. 
By adding $v_i$ to $S'$, the weight increases by $l$. That is, the weight
of partition $(S' \cup \{v_i\}, V \setminus (S' \cup \{v_i\})$ is $k + l$.
It implies a one-to-one correspondence between the partitions in 
$\mathcal{P}_{S|V^i}$ with weight $k$  and those in $\mathcal{P}_{S 
\cup \{v_i\}|V^i}$ with weight $k + l$. Hence we have 
$W_{S\cup\{v_i\}|V^i}[k + l] = W_{S | V^i}[k]$ for any $k$. It clearly
follows $W_{S\cup\{v_i\}|V^i} = \sigma_{l}(W_{S | V^i})$.
The lemma is proved. \qed
\end{proof}

The recursive formula in Lemma~\ref{lma:Recursion} trivially allows
us to compute $W_{S|V_1} = W_{S|V^{n - 1}}$ within polynomial time in $n$. 
For the usefulness of the following argument, we encapsulate this
recursion process by function {\sf shift} shown in the pseudocode of 
Algorithm~\ref{algo:basic}. Let $L: 2^{V_1} \to \mathbb{Z}^{|V_1|}$ be 
the function such that
$L(X)[i] = |c(v_i, V_1 \setminus X)| - |c(v_i, X)|$ holds for any 
$v_i \in V_2$. Our $O^{\ast}(2^n)$-time algorithm computes and sums up 
the values of {\sf shift}($W_{X|V^0}, L(X)$) over all partitions of $V_1$. 
That is, our algorithm computes the right side of 
the following equality:
\begin{align}
W'_{C^{\perp}(G)} &= \sum_{S \subseteq V_1} \mathsf{shift}(W_{S|V^0}, L(S)). 
\label{math:basicAlgo}
\end{align}
The correctness of this formula is obvious from the definition of $W_{S|V_1}$.

\begin{theorem}
There is an algorithm computing $W'_{C^{\perp}(G)}$ with
$O^{\ast}(2^n)$ time.
\end{theorem}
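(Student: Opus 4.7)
The plan is to derive the algorithm directly from equation (\ref{math:basicAlgo}), which expresses $W'_{C^{\perp}(G)}$ as a sum, over the $2^n$ subsets $S \subseteq V_1$, of terms each of which can be produced by the function $\mathsf{shift}$ in polynomial time. So the target running time $O^{\ast}(2^n)$ will follow once two things are checked: (i) equation (\ref{math:basicAlgo}) is correct, and (ii) one evaluation of $\mathsf{shift}(W_{S\mid V^0}, L(S))$ takes $\mathrm{poly}(n)$ time.

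For (i), I would argue as follows. Every partition $(T, V\setminus T)$ of $V$ restricts to a unique partition of $V_1$, namely $(T\cap V_1, V_1\setminus T)$, and conversely every partition $(S, V_1\setminus S)$ of $V_1$ is the restriction of the partitions in $\mathcal{P}_{S\mid V_1}$. Since the sets $\mathcal{P}_{S\mid V_1}$ for $S\subseteq V_1$ are disjoint and cover all partitions of $V$, the weight distribution $W'_{C^{\perp}(G)}$ decomposes as $\sum_{S\subseteq V_1} W_{S\mid V_1}$, which is exactly the right hand side of (\ref{math:basicAlgo}) once one observes that $\mathsf{shift}(W_{S\mid V^0}, L(S)) = W_{S\mid V^n} = W_{S\mid V_1}$ by iterating Lemma~\ref{lma:Recursion} from $i=0$ to $i=n-1$.

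For (ii), I would first check the base case: since $V^0 = V$, the only partition of $V$ conditioned by $(S, V\setminus S)$ is $(S, V\setminus S)$ itself, so $W_{S\mid V^0}$ is a vector with a single $1$ at position $|c(S, V\setminus S)|$, trivially computable in polynomial time from the adjacency structure. Then each of the $n$ recursive steps in $\mathsf{shift}$ performs an addition and a shift of length-$O(m)$ integer vectors, which is polynomial. The vector $L(S)$ is itself computable in polynomial time, since each entry $L(S)[i]$ is a difference of two neighborhood counts of $v_i$ in $V_1$. Assembling these pieces, a single term in the sum takes $\mathrm{poly}(n)$ time.

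Combining (i) and (ii), the algorithm enumerates all $2^n$ subsets $S\subseteq V_1$, computes $\mathsf{shift}(W_{S\mid V^0}, L(S))$ for each, and accumulates the sum; the total running time is $O^{\ast}(2^n)$. The only real subtlety, and thus the main point to be careful about, is the bookkeeping of the base case and the one-to-one correspondence claim underlying equation (\ref{math:basicAlgo}); the recursion itself is already provided by Lemma~\ref{lma:Recursion}.
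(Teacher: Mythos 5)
Your proposal is correct and matches the paper's own argument: the paper likewise defines the algorithm as evaluating $\sum_{S \subseteq V_1} \mathsf{shift}(W_{S|V^0}, L(S))$ over all $2^n$ subsets of $V_1$, with correctness following from the disjoint decomposition of partitions of $V$ by their restriction to $V_1$ and with each term computed in polynomial time via Lemma~\ref{lma:Recursion}. Your explicit treatment of the base case $W_{S|V^0}$ and of the iteration count is a slightly more careful write-up of what the paper dismisses as "obvious," but it is the same proof.
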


\Newcodeline
\begin{algorithm}[t]
\caption{{\sf shift}: Function for computing $W_{X|V^{n - 1}}$}
\label{algo:basic}
\begin{tabbing}
111 \= 11 \= 11 \= 11 \= 11 \= 11 \= 11 \= \kill
\Cl \> {\bf function} {\sf shift}($W, L$) \` 
\Scomment{$W \in \mathbb{N}^m$ and
, $L \in \mathbb{N}^{\ast}$} \crm
\Cl \> \> {\bf while} $L$ is not empty {\bf do} \crm
\Cl \> \> \> $l \ot$ the head of $L$ \crm
\Cl \> \> \> Remove the head of $L$ \crm
\Cl \> \> \> $W \ot W + \sigma_{l}(W)$ \crm
\Cl \> \> {\bf endwhile} \crm
\Cl \> \> {\bf return} $W$%
\end{tabbing}%
\end{algorithm}

\subsection{Function {\sf shift} as a Linear Transformation}

Before introducing the faster algorithm, we show several properties
of Function {\sf shift}. Let $H = \{h_{i,j}\}\in \mathbb{R}^{m \times m}$ 
be the matrix defined as $h_{i,j} = 1$ if $j = i + 1$ and $0$ otherwise. 
It is easy to
check this matrix works as the operator $\sigma_1$, i.e., for any
$m$-dimensional vector $W$, $WH^x = \sigma_x(W)$ holds. Hence we can describe
function {\sf shift}($W, L$) for a given sequence 
$L = (l_0, l_1, \cdots l_{n-1})$ as follows:
\begin{align}
\mathsf{shift}(W, L) &= W \left( \prod_{i = 0}^{n-1} (H^{l_i} + I) \right),
\label{math:shiftLinear}
\end{align}
where $I$ be the $m \times m$ identity matrix. 
We can obtain the following lemma:

\begin{lemma} \label{lma:propShift}
Letting $L$ and $L'$ be two sequences of integers, and 
$W_1, W_2 \in \mathbb{N}^m$. Then the following properties hold:
\begin{enumerate}
\item 
$\sigma_x(\mathsf{shift}(W, L)) = \mathsf{shift}(\sigma_x(W), L)$,
\item $\mathsf{shift}(\mathsf{shift}(W, L), L') = \mathsf{shift}(W, L \circ L')$,
\item
$\sigma_x(W_1 + W_2) =\sigma_x(W_1) + \sigma_x(W_2)$, and 
$\mathsf{shift}(W_1 + W_2, L) = \mathsf{shift}(W_1, L) + \mathsf{shift}(W_2, L)$,

\end{enumerate}
where $\circ$ is the concatenation of two sequences.
\end{lemma}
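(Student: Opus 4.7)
The plan is to reduce everything to the matrix identity
\[
\mathsf{shift}(W, L) = W \prod_{i = 0}^{|L|-1} (H^{l_i} + I)
\]
given in equation (\ref{math:shiftLinear}), together with the equality $\sigma_x(W) = WH^x$. Since both $\sigma_x$ and $\mathsf{shift}(\cdot,L)$ are right-multiplications by matrices, and the involved matrices are all polynomials in the single matrix $H$, every claim becomes either a statement about commutativity within the polynomial ring $\mathbb{R}[H]$ or about distributivity of matrix multiplication.

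For Property~1, I would write
\[
\sigma_x(\mathsf{shift}(W, L)) \;=\; W \left( \prod_{i} (H^{l_i} + I) \right) H^x
\;=\; W H^x \left( \prod_{i} (H^{l_i} + I) \right)
\;=\; \mathsf{shift}(\sigma_x(W), L),
\]
where the middle equality is the key point: the scalar power $H^x$ commutes with every factor $H^{l_i} + I$ because they are all polynomials in the single matrix $H$. For Property~2, I would split the product of matrices according to the concatenation:
\[
\mathsf{shift}(W, L \circ L') \;=\; W \left(\prod_{l \in L} (H^{l} + I)\right)\left(\prod_{l' \in L'} (H^{l'} + I)\right) \;=\; \mathsf{shift}(\mathsf{shift}(W, L), L'),
\]
which is just associativity of matrix multiplication. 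For Property~3, both equalities are direct consequences of the distributive law $(W_1 + W_2) M = W_1 M + W_2 M$ applied with $M = H^x$ and $M = \prod_i (H^{l_i} + I)$ respectively.

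Honestly, there is no real obstacle here: once the linear-algebraic reformulation (\ref{math:shiftLinear}) is in hand, each property is one line. The only thing that needs a moment's thought is the commutativity invoked in Property~1, for which it suffices to observe that all matrices under consideration are polynomials in $H$ and hence pairwise commute.
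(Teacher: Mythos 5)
Your proof is correct and follows essentially the same route as the paper's: both reduce every claim to the matrix formulation of Equation~(\ref{math:shiftLinear}), noting that $\sigma_x$ and $\mathsf{shift}(\cdot,L)$ are right-multiplications by polynomials in the single matrix $H$, so the properties follow from commutativity, associativity, and distributivity. You simply spell out in three displayed lines what the paper compresses into one sentence.
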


\begin{proof} \normalfont

Since $\sigma_x(W) = \mathsf{shift}(W, (x))$, we can treat $\sigma_x$
equivalently to ${\sf shift}$. Clearly, Equation~\ref{math:shiftLinear} 
implies that {\sf shift}$(\ast, L)$ is a commutative linear 
transformation. Thus all properties obviously hold. \qed

\end{proof}

\subsection{Improving Running Time}

In this subsection, we consider an improvement of $O^{\ast}(2^n)$-time algorithm.
The running time of the improved algorithm is 
$O^{\ast}(2^{(1 - \frac{1}{5\Delta \log \Delta})n})$ and consumes 
exponential space, where $\Delta$ is the average degree of the input graph. 

The underlying principle of the improved algorithm is very simple: Separating 
two smaller subproblems. Let $(T_1, U_1)$ be a partition of $V_1$ (i.e., 
$T_1 = V_1 \setminus U_1$) fixed by the algorithm, $N(U_1) \subseteq
V_2$ be the set of vertices adjacent to $U_1$, and
$v_0, v_1, \cdots v_{n - 1}$ be an arbitrary ordering of $V_2$ such that
the last $|N(U_1)|$ vertices correspond to $N(U_1)$. The cardinality
of $N(U_1)$ is denoted by $h$ for short. Now we consider the situation 
where $U_1$ and $T_1$ are partitioned into $(X, U_1 \setminus X)$ and 
$(Y, T_1 \setminus Y)$. If we regards $X$ and $Y$ 
as variables, the first $n - h$ entries $(l_0, l_1, \cdots l_{n - h})$ 
of $L(X \cup Y)$ become a function of $X$, which are independent of the 
value of $Y$. In contrast, the last $h$ entries 
$(l_{n - h}, l_{n - h +1}, \cdots l_{n-1})$ are a function of both 
$X$ and $Y$. Consequently, by two  appropriate functions
$L_T : 2^{|T_1|} \to \mathbb{Z}^{n - h}$ and 
$L_U : 2^{T_1} \times 2^{U_1} \to \mathbb{Z}^{h}$, 
the sequence $L(X \cup Y)$ can be described as follows:
\[
L(X, Y) = L_T(X) \circ L_U(X, Y).
\]

Then the following lemma holds: 

\begin{lemma} \label{lma:sequenceSum}
$L_U(X, Y) = L_U(X, \emptyset) + L_U(\emptyset, Y) - 
L_U(\emptyset, \emptyset)$. 
\end{lemma}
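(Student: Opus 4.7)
The plan is to unfold the definition of $L_U$ coordinate by coordinate and observe that, for each $v_i \in N(U_1)$, the corresponding entry splits into a part that depends only on $X$ plus a part that depends only on $Y$. The identity then follows by simple arithmetic.

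First I would fix an index $i$ with $n-h \le i \le n-1$, so that $v_i \in N(U_1)$, and write out $L_U(X,Y)[i]$ using its definition as the relevant entry of $L(X \cup Y)$, namely $L_U(X,Y)[i] = |c(v_i, V_1 \setminus (X \cup Y))| - |c(v_i, X \cup Y)|$. Next I would use that $V_1 = U_1 \sqcup T_1$ and that $X \subseteq U_1$ is disjoint from $Y \subseteq T_1$, so the neighborhood of $v_i$ in $V_1$ decomposes into its part in $U_1$ and its part in $T_1$. This yields the additive split
\begin{equation*}
L_U(X,Y)[i] = \bigl(|c(v_i, U_1 \setminus X)| - |c(v_i, X)|\bigr) + \bigl(|c(v_i, T_1 \setminus Y)| - |c(v_i, Y)|\bigr).
\end{equation*}

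Second, I would apply this formula in the three special cases $(X,Y)$, $(X,\emptyset)$, $(\emptyset,Y)$, and $(\emptyset,\emptyset)$. A direct substitution shows that
\begin{equation*}
L_U(X,\emptyset)[i] + L_U(\emptyset,Y)[i] - L_U(\emptyset,\emptyset)[i]
\end{equation*}
simplifies, after the $|c(v_i, U_1)|$ and $|c(v_i, T_1)|$ terms cancel, to exactly the expression for $L_U(X,Y)[i]$ derived above. Since this holds for every coordinate $i$ in the range of $L_U$, the vector identity follows.

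There is no real obstacle; the statement is essentially the observation that the single-vertex cut contribution at $v_i$ is an affine function of the pair $(X,Y)$ when $X$ and $Y$ lie in disjoint sides $U_1$ and $T_1$, so inclusion-exclusion with the three corners $(X,\emptyset)$, $(\emptyset,Y)$, $(\emptyset,\emptyset)$ of the rectangle recovers the value at $(X,Y)$. The only point worth flagging is to be explicit that $X \cap Y = \emptyset$ and that the neighbors of $v_i$ in $X \cup Y$ really do partition into its neighbors in $X$ and in $Y$, which is immediate from $X \subseteq U_1$, $Y \subseteq T_1$ and $U_1 \cap T_1 = \emptyset$.
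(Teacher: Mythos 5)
Your proof is correct and follows essentially the same route as the paper: unfold $L_U$ coordinatewise, split the cut count at each $v_i$ over the two disjoint parts of $V_1$, and let the $|c(v_i,T_1)|$ and $|c(v_i,U_1)|$ terms cancel against $L_U(\emptyset,\emptyset)[i]$. (Only note that the paper's conventions, per the signature $L_U : 2^{T_1}\times 2^{U_1}\to\mathbb{Z}^h$, have $X\subseteq T_1$ and $Y\subseteq U_1$ --- the reverse of your labeling --- but the argument is symmetric in the two parts.)
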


\begin{proof} \normalfont
We prove $L_U(X, Y)[i] = L_U(X, \emptyset)[i] + L_U(\emptyset, Y)[i] 
- L(\emptyset, \emptyset)[i]$ for any $i$.
Since  $X \subseteq T_1$ and $Y \subseteq U_1$ are mutually disjoint, 
the sets of edges $c(v_i, X)$ and $c(v_i, Y)$ are
mutually disjoint. Thus we have $|c(v_i, X \cup Y)| = |c(v_i, X)| + |c(v_i, Y)|$.
Similarly, we have $|c(v_i, V_1 \setminus (X \cup Y))| = 
|c(v_i, (T_1 \setminus X) \cup (U_1 \setminus Y))| = |c(v_i, (T_1 \setminus X))|
+ |c(v_i, (U_1 \setminus Y))|$. Then we can obtain the following equality:
\begin{align*}
L_U(X, Y)[i] &= |c(v_i, V_1 \setminus X \cup Y)| - |c(v_i, (X \cup Y))| \\
&= |c(v_i, (T_1 \setminus X))| - |c(v_i, X)| + |c(v_i, (U_1 \setminus Y))| 
- |c(v_i, Y)|\\
&= |c(v_i, (V_1 \setminus X))| - |c(v_i, T_1)| - |c(v_i, X)| 
+ |c(v_i, (V_1 \setminus Y))| - |c(v_i, U_1)| - |c(v_i, Y)| \\
&= L_U(X, \emptyset)[i] + L_U(\emptyset, Y)[i] - |c(v_i, T_1)| - |c(v_i, U_1)| \\
&= L_U(X, \emptyset)[i] + L_U(\emptyset, Y)[i] - |c(v_i, T_1 \cup U_1)| \\
&= L_U(X, \emptyset)[i] + L_U(\emptyset, Y)[i] - L_U(\emptyset, \emptyset)[i].
\end{align*}
The lemma is proved. \qed
\end{proof}

The improved algorithm runs as follows:
\begin{itemize}
\item (Step 1) We divide all partitions of $T_1$
into several classes $\mathcal{C}_0, \mathcal{C}_1, \cdots \mathcal{C}_x$
such that for any two partitions $(X_1, T_1 \setminus X_1)$ and 
$(X_2, T_1 \setminus X_2)$ in the same class, $L_U(X_1, \emptyset) = 
L(X_2, \emptyset)$ holds.
\item (Step 2) For each $i \in [1, x]$, we compute weight
distribution $W_i = \sum_{(X, T_1 \setminus X) \in \mathcal{C}_i} 
W_{X|V^{n - h - 1}}$.

(Note that $W_i = \sum_{(X, T_1 \setminus X) \in \mathcal{C}_i} 
\mathsf{shift}(W_{X|V^0}, L_T(X))$ holds.)
\item (Step 3) Let $L(i)$ be the value of $L_U(X, \emptyset)$ 
associated with class $\mathcal{C}_i$ and $c_Y = |c(Y, V_2)|$ for short.
For each $i \in [0, x]$ and each partition $(Y, U_1 \setminus Y)$ of 
$U_1$, we compute $L_U(i, Y) = L(i) + L_U(\emptyset, Y) -
L_U(\emptyset, \emptyset)$ and 
{\sf shift}($\sigma_{c_Y}(W_i), L_U(i, Y)$). The sum of all 
the values returned by function {\sf shift} is the output of the algorithm.
\end{itemize}

We can show the following lemma, which directly leads the correctness
of the algorithm:

\begin{lemma} \label{lma:Sum}
$W'_{C^{\perp}(G)} = \sum_{i = 1}^{x} \sum_{Y \subseteq U_1} 
\mathsf{shift}(\sigma_{c_Y}(W_i), L_U(i, Y))$. 
\end{lemma}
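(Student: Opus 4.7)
The plan is to derive Lemma~\ref{lma:Sum} directly from equation~(\ref{math:basicAlgo}) by splitting each $S\subseteq V_1$ as $S=X\cup Y$ with $X\subseteq T_1$ and $Y\subseteq U_1$, and then applying the algebraic identities in Lemmas~\ref{lma:propShift} and~\ref{lma:sequenceSum} to refactor the resulting double sum into exactly the shape claimed.

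First I would rewrite
\[
W'_{C^{\perp}(G)}=\sum_{X\subseteq T_1}\sum_{Y\subseteq U_1}\mathsf{shift}(W_{X\cup Y|V^0},\,L(X\cup Y))
\]
and, using $L(X\cup Y)=L_T(X)\circ L_U(X,Y)$ together with property~(2) of Lemma~\ref{lma:propShift}, split the shift into two stages. Since $G$ is bipartite and $X,Y$ are disjoint subsets of $V_1$, the singleton $W_{X\cup Y|V^0}$ lives at position $|c(X,V_2)|+c_Y$, so $W_{X\cup Y|V^0}=\sigma_{c_Y}(W_{X|V^0})$; by property~(1) of Lemma~\ref{lma:propShift}, $\sigma_{c_Y}$ commutes past the inner $\mathsf{shift}(\,\cdot\,,L_T(X))$, leaving
\[
\sum_{X\subseteq T_1}\sum_{Y\subseteq U_1}\mathsf{shift}\bigl(\sigma_{c_Y}(\mathsf{shift}(W_{X|V^0},L_T(X))),\,L_U(X,Y)\bigr).
\]

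The substantive point is to verify that the inner expression $\mathsf{shift}(W_{X|V^0},L_T(X))$ is exactly the summand that Step~2 of the algorithm accumulates into $W_i$. Concretely, for every $i<n-h$ the value $L_T(X)[i]=|c(v_i,T_1\setminus X)|-|c(v_i,X)|$ must agree with the shift prescribed by Lemma~\ref{lma:Recursion} when applied to the full partition $X\cup Y$ of $V_1$; agreement holds because $v_i\notin N(U_1)$ forces $c(v_i,Y)=c(v_i,U_1\setminus Y)=\emptyset$, so $c(v_i,V_1\setminus(X\cup Y))=c(v_i,T_1\setminus X)$ and $c(v_i,X\cup Y)=c(v_i,X)$. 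This is precisely why the ordering of $V_2$ was chosen so that $N(U_1)$ sits at the tail.

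To finish, I would regroup the outer sum over $X\subseteq T_1$ by the classes $\mathcal{C}_i$. By Lemma~\ref{lma:sequenceSum}, $L_U(X,Y)$ depends on $X$ only through $L_U(X,\emptyset)$, which is constant (and equal to $L(i)$) on $\mathcal{C}_i$; hence $L_U(X,Y)=L_U(i,Y)$ whenever $X\in\mathcal{C}_i$. Pulling the inner sum over $X\in\mathcal{C}_i$ through both $\sigma_{c_Y}$ and $\mathsf{shift}(\,\cdot\,,L_U(i,Y))$ via the linearity assertions in property~(3) of Lemma~\ref{lma:propShift} collects $W_i=\sum_{X\in\mathcal{C}_i}\mathsf{shift}(W_{X|V^0},L_T(X))$ inside, producing exactly $\sum_{i}\sum_{Y\subseteq U_1}\mathsf{shift}(\sigma_{c_Y}(W_i),L_U(i,Y))$. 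The only delicate part of the argument is the $l$-value verification described above, which is where the bipartite structure and the chosen vertex ordering are actually used; everything else is a direct transcription of the identities already established.
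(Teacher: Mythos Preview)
Your proposal is correct and follows essentially the same route as the paper's proof: both arguments hinge on the identity $\sigma_{c_Y}(W_{X|V^0})=W_{X\cup Y|V^0}$, the decomposition $L(X\cup Y)=L_T(X)\circ L_U(X,Y)$, and the commutativity/linearity properties of Lemma~\ref{lma:propShift} together with Lemma~\ref{lma:sequenceSum}. The only cosmetic difference is that the paper runs the chain of equalities from the right-hand side back to $W'_{C^{\perp}(G)}$, whereas you start from equation~(\ref{math:basicAlgo}) and work forward; your added justification that $L_T(X)[i]$ is independent of $Y$ because $v_i\notin N(U_1)$ makes explicit something the paper merely asserts.
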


\begin{proof} \normalfont
Since $W_{X|V^{0}}$ is the distribution over
singleton $\{(X, V^0 \setminus X)\}$, we have 
$W_{X|V^0}[i] = 1$ for $i = |c(X, V^0 \setminus X)|$ and $0$ 
otherwise. Thus, we have $\sigma_{c_Y}(W_{X|V^0})[i] = 1$
for $i = |c(X, V^0 \setminus X)| + c_Y$ and $0$ otherwise.
Since $|c(X, V^0 \setminus X)| + c_Y = |c(X \cup Y, 
V \setminus (X \cup Y))|$ holds, we obtain
\begin{align}
\sigma_{c(Y)}(W_{X|V^0}) &= W_{X \cup Y | V^0}.
\label{math:initDistribution}
\end{align}
By using this equation, Lemma~\ref{lma:propShift} and \ref{lma:Sum}, 
we can obtain the following equality:
\begin{align*}
\lefteqn{\sum_{i = 1}^{x} \sum_{Y \subseteq U_1} 
\mathsf{shift}(\sigma_{c_Y}(W_i), L_U(i, Y))} \hspace{7mm} \crm
&= \sum_{i = 1}^{x} \sum_{Y \subseteq U_1} 
\mathsf{shift}\left(\sigma_{c_Y}\left(\sum_{(X, T_1 \setminus X) 
\in \mathcal{C}_i} W_{X|V^{n - h - 1}}\right), L_U(i, Y)\right) 
\\
&= \sum_{i = 1}^{x} \sum_{Y \subseteq U_1} 
\sum_{(X, T_1 \setminus X) \in \mathcal{C}_i}
\mathsf{shift}\left( 
\sigma_{c_Y}(W_{X|V^{n - h - 1}}), L_U(X, Y)\right)
\\
&= \sum_{i = 1}^{x} \sum_{Y \subseteq U_1} 
\sum_{(X, T_1 \setminus X) \in \mathcal{C}_i} 
\mathsf{shift}\left(\sigma_{c_Y}(\mathsf{shift}(W_{X|V^{0}}, L_T(X))), L_U(X, Y)\right)
\\
&= \sum_{i = 1}^{x} \sum_{Y \subseteq U_1} 
\sum_{(X, T_1 \setminus X) \in \mathcal{C}_i} 
\mathsf{shift}\left(
\mathsf{shift}(\sigma_{c_Y}(W_{X|V^{0}}), L_T(X)), L_U(X, Y)\right)
\\
&= \sum_{i = 1}^{x} \sum_{Y \subseteq U_1} 
\sum_{(X, T_1 \setminus X) \in \mathcal{C}_i} 
\mathsf{shift}\left(\sigma_{c_Y}(W_{X|V^{0}}), L_T(X) \circ L_U(X, Y)\right)
\\
&= \sum_{X \subseteq T_1, Y \subseteq U_1} 
\mathsf{shift}(W_{X \cup Y|V^0}, L(X \cup Y)) 
\\
&= W'_{C^{\perp}(G)}.
\end{align*}
The lemma is proved. \qed
\end{proof}

We focus on the running time of the algorithm. Clearly the first and
second steps of the algorithm take $O^{\ast}(2^{n - |U_1|})$ time 
respectively. The third step requires time of $O^{\ast}(x2^{|U_1|})$.
Thus the total running time is $O^{\ast}(2^{n - |U_1|} + x2^{|U_1|})$.

How small can we bound $x$? Clearly, it is upper bounded 
by the size of the domain of $L_U(X)$. 
From the definition, the value of $L_U(X)[i - (n - h)]$ can
take $d(v_i) + 1$ different values  for any $v_i \in N(U_1)$. 
It follows $x \leq \prod_{v_i \in N(U_1)} (d(v_i) + 1)$.
By applying the arithmetic mean-geometric mean inequality,
we can further bound $x$ by 
$((\sum_{v_i \in N(U_1)} (d(v_i)+ 1))/|N(U_1)|)^{|N(U_1)|}$.
Letting $\Delta_X$ be the average degree over $X \subseteq V$ in $G$, 
we have 
\begin{align}
x &\leq (\Delta_{N(U_1)}+1)^{|N(U_1)|}.
\label{math:boundX} 
\end{align}
We consider how to choose $U_1$. Letting $\Delta$
be the average degree of $G$, $V_1$ contains a subset $X$ of $n/5$ 
vertices whose degrees are at most $5\Delta/4$. We choose 
$n/(5 \Delta \log \Delta)$ vertices from $X$ as $U_1$. For that choice we have
$|N(U_1)| \leq n/(4 \log \Delta)$. Since $|N(U_1)| \Delta_{N(U_1)} 
\leq \Delta n$
holds, we obtain $\Delta_{N(U_1)} \leq 4 \Delta \log \Delta$.
By assigning this bound to Inequality~\ref{math:boundX}, we obtain
\[
x \leq (4 \Delta \log \Delta + 1)^{\frac{n}{4 \log \Delta}} 
\leq (4 \Delta^{2})^{\frac{n}{4 \log \Delta}} = 
O (2^{\frac{5n}{6}})
\] 
Consequently, it follows that the running time of our algorithm is 
$O^{\ast}(2^{(1 - \frac{1}{5 \Delta \log \Delta})n})$. 

\begin{theorem}
There is an an algorithm for counting perfect matchings of bipartite graphs
which runs with $O^{\ast}(2^{(1 - \frac{1}{5 \Delta \log \Delta})n})$ time 
and exponential space.
\end{theorem}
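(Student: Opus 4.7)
\normalfont
The plan is to assemble the pieces already developed in the paper and then tune the free parameter $U_1$ so that the two cost terms in the improved algorithm are balanced. First, apply the transformation of Section~\ref{sec:Reduction} to the input $G$ to obtain the odd bipartite graph $\tilde G$ with the same number of perfect matchings (Lemma~\ref{lmaBiparttiteGraph}); since only a constant number of vertices and edges are added, $\tilde G$ still has $(1+o(1))\Delta n$ edges and average degree $\Theta(\Delta)$. By Lemma~\ref{lmaMatchingCode} and Theorem~\ref{thmMcWilliams} (the MacWilliams identity), the matching count equals $W_{C(\tilde G)}[m-n]$, which is a fixed linear combination with coefficients $K_m(j,i)/2^r$ of the entries of $W_{C^\perp(\tilde G)}$; this combination can be evaluated in polynomial-in-$n$ time once $W'_{C^\perp(\tilde G)}$ is in hand. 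Hence it suffices to bound the cost of the three-step algorithm from Section~\ref{sec:Algorithm} applied to $\tilde G$.

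Next I would verify the running time claim. Steps 1 and 2 enumerate partitions $(X, T_1\setminus X)$ of $T_1 = V_1\setminus U_1$ and, for each, invoke {\sf shift} to compute $W_{X|V^{n-h-1}}$ in polynomial time, contributing $O^\ast(2^{|T_1|}) = O^\ast(2^{n-|U_1|})$. Step 3 iterates over the $x$ equivalence classes and the $2^{|U_1|}$ partitions of $U_1$, each call to {\sf shift} taking polynomial time, giving $O^\ast(x\cdot 2^{|U_1|})$. Lemma~\ref{lma:Sum} provides the correctness, so the total running time is $O^\ast(2^{n-|U_1|} + x\cdot 2^{|U_1|})$, and the task reduces to choosing $U_1$ so that both terms are at most $2^{(1-1/(5\Delta\log\Delta))n}$.

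The main obstacle, and the place where the bipartite degree structure matters, is the bound on $x$. The count $x$ is at most the size of the range of $X\mapsto L_U(X,\emptyset)$, whose $i$-th coordinate can take at most $d(v_i)+1$ distinct values for each $v_i\in N(U_1)$, giving $x\leq \prod_{v_i\in N(U_1)}(d(v_i)+1)\leq(\Delta_{N(U_1)}+1)^{|N(U_1)|}$ by the AM--GM inequality (Inequality~\ref{math:boundX}). To exploit this, I would pick $U_1$ inside a low-degree portion of $V_1$: since at least $n/5$ vertices of $V_1$ have degree at most $5\Delta/4$, selecting $n/(5\Delta\log\Delta)$ of them yields $|U_1|\cdot 5\Delta/4\geq$ a cheap edge count, and more importantly $|N(U_1)|\leq n/(4\log\Delta)$. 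The handshake bound $|N(U_1)|\Delta_{N(U_1)}\leq \Delta n$ then forces $\Delta_{N(U_1)}\leq 4\Delta\log\Delta$, so $x\leq (4\Delta\log\Delta+1)^{n/(4\log\Delta)}\leq (4\Delta^2)^{n/(4\log\Delta)}=O(2^{5n/6})$. Plugging these into $O^\ast(2^{n-|U_1|}+x\cdot 2^{|U_1|})$ and taking $|U_1|=n/(5\Delta\log\Delta)$ balances the two terms at $O^\ast(2^{(1-1/(5\Delta\log\Delta))n})$. The exponential space is consumed by storing the table $\{W_i\}_{i\in[1,x]}$ used between Steps 2 and 3.
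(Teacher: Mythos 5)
Your proposal is correct and follows essentially the same route as the paper: reduce to the cut-weight distribution of the odd bipartite graph via Lemma~\ref{lmaBiparttiteGraph}, Lemma~\ref{lmaMatchingCode} and the MacWilliams identity, then run the three-step split over $(T_1,U_1)$ with the identical choice $|U_1|=n/(5\Delta\log\Delta)$ drawn from the $n/5$ vertices of degree at most $5\Delta/4$, and the same AM--GM bound $x=O(2^{5n/6})$. The only cosmetic difference is that you describe the two cost terms as ``balanced'' when in fact the $2^{n-|U_1|}$ term dominates (for $\Delta\geq 3$), but the stated bound still follows.
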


\section{Related Work}
\label{sec:RelatedWork}

As seen in the introduction, we have roughly three lines about the studies
on counting perfect matchings. We introduce the related work along them
respectively. 

There has been proposed two different approach for approximating the
number of perfect matchings. The first one is the 
Markov-chain Monte Carlo method, which gives a fully-polynomial 
randomized approximation scheme (FPRAS) for counting 
perfect matchings~\cite{JS89,JSV04,BSVV06}. The second one is 
a randomized averaging of the determinant~\cite{GG81,KKLLL93,CRS02}. 
The fastest approximation algorithm on this approach is one
by Chien et.al.~\cite{CRS02}, which runs with $O(1.2^n)$ time. It is 
still an open problem whether there exists a FPRAS following this 
approach or not.

The second line is the algorithm design for restricted inputs. 
A seminal work on this line is a polynomial-time exact counting algorithm 
for planar graphs~\cite{Kasteleyn67}. As other restrictions, graphs of 
bounded genus~\cite{GL99,Tesler00} or bounded treewidth~\cite{ALS91,RBR09}, 
and chordal graphs with its subclass~\cite{OUU09} are considered.

About the line of exact algorithms, we have already mentioned the results
for bipartite graphs in the introduction. Thus we introduce only the work 
on counting perfect matchings for general graphs. A first result breaking
the trivial $O^{\ast}(2^m)$-time bound is one by Bj\"{o}rklund and 
Husfeldt~\cite{BH08}, which has shown two algorithms: The first one
runs with $O^{\ast}(2^{2n})$ time and polynomial space, and 
the second rounds with $O^{\ast}(1.733^{2n})$ time and exponential space. 
These algorithms are similar with our result in the sense that it also 
reduces the problem into a counting over a different universe. A number 
of the following studies improve this
bound~\cite{Koivisto09,Nederlof08,AFS09,Nederlof10,Bjorklund12}. 
The most recent and fastest one is the algorithm by 
Bj\"{o}rklund~\cite{Bjorklund12}, which achieves the same running time 
as Ryser's algorithm (that is, currently we do not find the difference of 
inherent difficulty between bipartite and general graphs). About time 
complexity,  Dell et.al.~\cite{DHW10} has shown that any algorithm 
has an instance of $m$ edges incurring $\Omega(exp(m/\log m))$ time 
if we believe that a counting version of the Exponential Time 
Hypothesis~\cite{IPZ01} is true.

\section{Concluding Remarks}
\label{sec:Conclusion}

In this paper, we presented a new algorithm for the problem of 
counting perfect matchings, which has an improved time bound 
depending on the average degree $\Delta$ of the input graph. 
Compared to previous results, our algorithm runs faster 
for many cases. In particular, the performance degradation to 
the increase of $\Delta$ is quite slower than the previous 
algorithms. The main idea of our algorithm is a new reduction 
to computing the cut-weight distribution of the input graph. 
Our algorithm is designed by combining this reduction with a 
novel algorithm for the computation of cut-weight distribution. 
The approach itself is quite new, and may be of independent interest.
Finally, we conclude the paper with several open problems related
to our approach.
\begin{itemize}
\item Can we achieve the running time exponentially faster than
Ryser's one by designing a faster algorithm computing cut-weight
distribution? 
\item The reduction part of our result is directly
applicable to any graph (which may not be bipartite). Can we 
use the reduction
to obtain a faster algorithm for general graphs? Actually,
letting $I(G)$ be the independent sets of the input graph $G$, we 
can easily obtain an algorithm with $O^{\ast}(2^{2n - |I(G)|})$ running time
by regarding $G$ as a "quasi" bipartite graph of two vertex sets $I(G)$ and $V \setminus I(G)$ and applying our $O^{\ast}(2^n)$-time algorithm, which gives
the same performance as the algorithm by~\cite{AFS09}. 
\item Is it possible to design a faster FPRAS for counting perfect
matchings based on our method? Note that an 
$(1 + \epsilon)$-approximation of the cut-weight distribution
trivially induces an $(1 + \epsilon)$-approximation of the
number of perfect matchings because of the linearity of the transformation.  
\item Computing cut-weight distribution is a special case of the counting
version of 2-CSP, which is addressed by Williams~\cite{Williams05}. 
In this sense, our reduction gives a new linkage from counting perfect 
matchings to CSP. 
Can we use this linkage for obtaining some new complexity result around
those problems?
\item Can we apply the same technique to other combinatorial 
problems? Interestingly, there has been proposed a variety of
MacWilliams-style Identities in the field of the coding theory.
We may find a useful transformation from those resources. In 
addition, it may be an interesting approach to focus on the 
primal-dual relationship of two universes. Can we design
a kind of primal-dual algorithms for counting problems?
\end{itemize}

\bibliographystyle{plain}
\bibliography{books}

\end{document}